\newtheorem{theorem}{Theorem}
\newtheorem{lemma}{Lemma}
\newtheorem{proposition}{Proposition}
\begin{document}

\title{Finding Community Structure with Performance Guarantees in Complex Networks}

\author{\IEEEauthorblockN{Thang N. Dinh and  My T. Thai}
\IEEEauthorblockA{Computer \& Information Science \& Engineering\\
University of Florida, Gainesville, FL, 32611,\\
Email: \{tdinh,mythai\}@cise.ufl.edu}
}

\maketitle

\begin{abstract}
Many networks including social networks, computer networks, and biological networks are found to divide naturally into communities of densely connected individuals. Finding community structure is one of fundamental problems in network science. Since Newman's suggestion of using \emph{modularity} as a measure to qualify the goodness of community structures, many efficient methods to maximize modularity have been proposed but without a guarantee of optimality. In this paper, we propose two polynomial-time algorithms to the modularity maximization problem with theoretical performance guarantees. The first algorithm comes with a \emph{priori guarantee} that the modularity of found community structure is within a constant factor of the optimal modularity when the network has the power-law degree distribution. Despite being mainly of theoretical interest, to our best knowledge, this is the first approximation algorithm for finding community structure in networks. In our second algorithm, we propose a \emph{sparse metric}, a substantially faster linear programming method for maximizing modularity and apply a rounding technique based on this sparse metric with a \emph{posteriori approximation guarantee}. Our experiments show that the rounding algorithm returns the optimal solutions in most cases and are very scalable, that is, it can run on a network of a few thousand nodes whereas the LP solution in the literature only ran on a network of at most 235 nodes.
\end{abstract}
\section{Introduction}
Many complex systems of interest such as the Internet, social, and biological relations, can be represented as networks consisting a set of \emph{nodes} which are connected by \emph{edges} between them. Research in a number of academic fields has uncovered unexpected structural properties of complex networks including small-world phenomenon \cite{Watts98}, power-law degree distribution \cite{Barabasi00}, and the existence of community structure \cite{Milo02} where nodes are naturally clustered into tightly connected modules, also known as communities, with only sparser connections between them.

The detection of community structures in networks is an important problem that has drawn an enormous amount of research effort \cite{Fortunato08}. A huge benefit of identifying community structure is that one can infer semantic attributes for different communities. For example in social networks, the attributes for a community can be common interest or location, and for metabolic networks the attribute could be a common function. Moreover, the relative independence among different communities allows the examining of each community individually, and an analysis of network at a higher-level of structure.

There are a wide variety of definitions for communities. In general, definitions can be  classified into two main categories: \emph{local definitions} and \emph{global definitions}. In local definitions, only the group of nodes  and its immediate neighborhood are considered, ignoring the rest of the network. For example, communities can be defined as maximal \emph{cliques}, \emph{quasi-cliques}, $k$-\emph{plexes}. The most famous definitions in this category are notions of \emph{strong community}, where each node has more neighbors inside than outside  the community, and \emph{weak community}, where the total number of inner edges must be at least half of the number of outgoing edges. 

In global definitions, communities can be only recognized by analyzing the network as a whole. This type of definitions is especially suitable when the next phase after the community detection is to optimize a global quantity, for example, minimizing the inter-group communication cost. The most widely-used quantity function in the global category is Newman's \emph{modularity} which is defined as the number of edges falling within communities minuses the expected number in an equivalent network with edges placed at random \cite{Girvan02}. A higher value of modularity, a better community structure. Thus, identifying a good community structure of a given network becomes finding a partition of networks so as to maximize the modularity of this partition, called modularity maximization problem.

Since the introduction of modularity, maximizing modularity has become primal approaches to detect community structure. Numerous computational methods have been proposed, based on agglomerative hierarchical clustering\cite{Day84}, simulated annealing\cite{Reichardt06}, genetic search \cite{Anca07}, extremal optimization \cite{Duch05}, spectral clustering \cite{Newman06}, multilevel partitioning \cite{Blondel08}, and many others. For a comprehensive view of community detection methods, we refer to an excellent survey of S. Fortunato and C. Castellano \cite{Fortunato08}. 

Unfortunately, Brandes et al. \cite{Brandes08} have shown that modularity maximization is an NP-hard problem, thereby denying the existence of polynomial-time algorithms to find optimal solutions. Thus, it is desirable to design polynomial-time approximation algorithms to find partitioning with a theoretical performance guarantee on the modularity values.

In contrary to the vast amount of work on maximizing modularity, the only known polynomial-time approach to find a good community structure with guarantees is due to G. Agarwal and D. Kempe \cite{Agarwal08} in which they rounded the fractional solution of a linear programming (LP). The value obtained by the LP is an upper bound on the maximum achievable modularity. Thus, their approach provide a posteriori guarantee on the error bound. In fact, the modularity values found by their approach are optimal for many network instances comparing with the optimal modularity values provided by expensive exact algorithms in \cite{Cafieri10}. The main drawback of the approach is the large LP formulation that consumes both time and memory resources. As shown in their paper, the approach can only be used on the networks of up to 235 nodes. Secondly, while the approach performs well on all considered networks, it does not promise any priori guarantees as provided by \emph{approximation algorithms}.

In this paper, we address the main drawback of the rounding LP approach by introducing an improved formulation, called \emph{sparse metric}. We show that our new technique substantially reduces the time and memory requirements  both theoretically and experimentally without any trade-off on the quality of the solution. The size of solved network instances raises from hundred to several thousand nodes while the running time on the medium-instances are sped up from 10 to 150 times.  

Our second contribution is an approximation algorithm that finds a community structure in networks with modularity values within a constant factor of the optimum when the considered networks have power-law degree distributions. To our best knowledge, it is the first approximation algorithm for finding community structure in networks. The algorithm is not only of theoretical interest, but also establish a connection between the power-law degree distribution properties and the presence of community structure in complex networks. Since community structure are often observed together with the power-law property, studying the community structure detection under power-law network models is of great important.  
  
\textbf{Organization.} We present definitions and notions in Section \ref{sec:prem}. We propose in Section \ref{sec:lp} the \emph{sparse metric} technique to efficiently maximize modularity via rounding a linear programming. An approximation algorithm for networks with the power-law degree distribution (so-called power-law networks) is introduced in Section \ref{sec:approx}. We show experimental results for the \emph{sparse metric} in Section \ref{sec:exp} to illustrate the time efficiency over the previous approach. Finally, in Section \ref{sec:dis} we summarize our results and  discuss on limitation of modularity as well as the corresponding resolution.

\section{Preliminaries}
\label{sec:prem}
A network can be represented as an undirected graph $G=(V,E)$\ consisting of $n=|V|$ nodes and $m=|E|$ edges. The adjacency matrix of $G$ is denoted by $A= \left( A_{i,j}\right)$, where $A_{i,j}= A_{j, i}=1 $\ if $i$\ and $j$\ share an edge and $A_{i, j} = A_{j, i} = 0$ otherwise. 

A modularity maximization problem asks us to identify a community structure $\mathcal{C}=\left\{C_1, C_2,\ldots,C_k \right\}$ of a given graph where each disjoint subsets $C_i$ are called \emph{communities} and $\bigcup_{i=1}^k C_i = V$ so as to maximize the modularity of $\mathcal{C}$. Note that $k$ is not a pre-defined value. The \emph{modularity} \cite{Newman06} of $\mathcal{C}$ is the fraction of the edges that fall within the given communities minus the expected number of such fraction if edges were distributed at random. The randomization of the edges is done so as to preserve the degree of each vertex. If nodes $i$ and $j$ have degrees $d_i$ and $d_j$, then the expected number of edges falling between $i$ and $j$ is $\frac{d_i d_j}{2m}$. Thus, the modularity, denoted $Q$, is then

\begin{eqnarray}
\label{def:modularity}
Q(\mathcal{C}) = \frac{1}{2m}\displaystyle\sum_{i, j} (A_{i,j} - \frac{d_i d_j}{2m})\delta_{ij}
\end{eqnarray}
where $\delta_{ij} = \begin{cases} 1, & \mbox{if } i, j \mbox{ are in the same communities} \\ 0, & \mbox{otherwise}.\end{cases}$.\\
We also define modularity matrix $B$  \cite{Newman06} as 
\[
B_{ij} = A_{ij} - \frac{d_i d_j}{2m}.
\]
%
We note that each row and column of $B$ sum up to zero, hence, $B$ always has the vector $(1, 1, 1,\ldots)$ as one of its eigenvectors.  The same property is also known for the network Laplacian matrix $L=D-A$, where $D$ is diagonal matrix with the $i$th entry to be $d_i$. Laplacian matrix $L$ is widely-used in spectral methods for the graph partitioning that is closely related to our community detection problem. We note that the major difference between the modularity matrix and the Laplacian matrix is that $L$\ is positive-definite while $B$\ is indefinite. As a consequence, while approximation algorithms for the graph partitioning problem  using Laplacian matrix $L$ are available, it is not known if such algorithms are possible for the modularity maximization problem.

 \vspace{-0.02in}
\section{Linear Programming Based Algorithm}
\label{sec:lp}
\subsection{The Linear Program and The Rounding}
The modularity maximization problem can be formulated as an Integer Linear Programming (ILP). The linear program has one variable $d_{i, j}$ for each pair $(i, j)$ of vertices to represent the ``distance'' between $i$\ and $j$ i.e. 
\[ 
d_{i,j} = \left\{
\begin{array}{ll} 
0 & \textrm{ if } i \textrm{ and } j \textrm{ are in the same community}\\
1 & \textrm{ otherwise}.
\end{array}
\right.
\] 
In other words, $d_{i,j}$ is equivalent to $1 - \delta_{i,j}$ in the definition  (\ref{def:modularity}) of modularity. Thus, the objective function to be maximized can be written as $\displaystyle\sum_{i, j}  B_{i, j} (1-d_{i,j})$. We note that there should be no confusion between $d_{i,j}$ the variable representing the distance between vertices $i$ and $j$ and constant $d_i$ (or $d_j$), the degree of node $i$ (or $j$).
The ILP to maximize modularity (IP$_\mathrm{complete}$) is as follows
\begin{align}   
\mbox{maximize }\ \ \, \label{IP:full:obj}   & \quad \frac{1}{2m}\displaystyle\sum_{i, j}  B_{i, j} (1-d_{i,j})&& \\    
\mbox{subject to \quad}     
&                \label{IP:full:triangle1}\,\quad d_{i,j} + d_{j, k} - d_{i,k} \geq 0,&&   \forall i < j < k \\
&                \label{IP:full:triangle2}\,\quad d_{i,j} - d_{j, k} + d_{i,k} \geq 0,&&   \forall i < j < k \\
&                \label{IP:full:triangle3}-d_{i,j} + d_{j, k} + d_{i,k} \geq 0,&&   \forall i < j < k \\   
 &      \label{IP:full:range}  \,\quad d_{i,j} \in [0, 1],&&  i, j \in [1..n], \
\end{align}
Constraints (\ref{IP:full:triangle1}), (\ref{IP:full:triangle2}), and (\ref{IP:full:triangle3}) are well-known \emph{triangle inequalities} that guarantee the values of $d_{i,j}$ are consistent to each other. They imply the following transitivity: if $i$ and $j$ are in the same community and $j$ and $k$ are in the same community, then so are $i$ and $k$. By definition, $d_{i,i} = 0\ \forall i$ and can be removed from the ILP for simplification.

To avoid solving ILP, that is also NP-hard, we  instead solve the LP relaxation of the ILP, obtained by replacing the constraints $d_{i,j} \in \{0, 1\}$ by $d_{i, j} \in [0, 1]$. We shall refer to the IP described above as IP$_\mathrm{complete}$ and its relaxation as LP$_\mathrm{complete}$. If the optimal solution of this relaxation is an integral solution, which is very often the case \cite{Cafieri10}, we have a partition with the maximum modularity. Otherwise, we resort on rounding the fractional solution and use the value of the objective as an upper-bound that enables us to lower-bound the gap between the rounded solution and the optimal integral solution. 

G. Agarwal and D. Kempe \cite{Agarwal08} use a simple rounding algorithm proposed by Charikar et al. \cite{Charikar04} for the \emph{correlation clustering} problem \cite{Bansal02}. The values of $d_{i,j}$ are interpreted as a metric ``distance'' between vertices. The algorithm repeatedly groups all vertices that are close by to a vertex  into a community. The final community structure are then refined by a Kernighan-Lin \cite{Kemighan70} based local search method.  

Since the rounding phase is comparatively simple, the burden of both time and memory comes from solving the large LP relaxation. The LP has ${ n \choose 2}$ variables and $3 {n \choose 3} = \theta(n^3)$ constraints that is about half a million constraints for a network of $100$ vertices, thereby limiting the the size of networks to few hundred nodes. Thus, there is a need to achieve the same guarantees with smaller resource requirements. By combining mathematical approach with combinatorial techniques, we achieve this goal in next subsection.         
\subsection{The Sparse Metric}
In this subsection, we devise an improved LP formulation for the modularity maximization  problem with much fewer number of constraints while getting the same guarantees on the performance.  

Instead of using $3{n \choose 3}$ triangle inequalities to ensure that $d_{i,j}$ is a metric (or pseudo-metric as defined later), we show that only a compact subset of inequalities, so-called \emph{sparse metric}, are sufficient to obtain the same fractional optimal solution.

A function $d$ is a pseudo-metric if $d(i, j) = d_{i,j}$ satisfy the following conditions:
\begin{enumerate}
   \item $d(i, j) \geq 0$\:\qquad \quad  (non-negativity)
   \item $d(i, i) = 0$ \qquad (and possibly $d(i, j) = 0$\ for some distinct values $i\ne j$)
   \item $d(i, j) = d(j, i)$ \quad     (symmetry)
  \item $d(i, j) \leq d(i, k) + d(k, j)$\quad (transitivity).
\end{enumerate}
It is clear that $d$ is an feasible solution of $LP_\mathrm{complete}$ if and only if $d$ is a pseudo-metric within the interval $[0, 1]$. 

Our new linear programming with the \emph{Sparse Metric} technique, denoted by IP$_\mathrm{sparse}$, is as follows: 
\begin{align}
\mbox{maximize }\ \ \, \label{IP:sparse:obj}   & \quad -\frac{1}{2m}\displaystyle\sum_{i, j}  B_{i, j} d_{i,j}&& \\    
\mbox{subject to \quad}     
& \label{IP:sparse:triangle1}\,\quad d_{i,k} + d_{k, j} \geq d_{i,j} &&   k \in N(i,j) \\
 &      \label{IP:sparse:range}  \,\quad d_{i,j} \in \{0, 1\},&&   
\end{align}
The objective can be simplified to $-\frac{1}{2m}\displaystyle\sum_{i, j}  B_{i, j} d_{i,j}$ since $\displaystyle\sum_{i,j}B_{i,j} = 0$. Let $N(i)$ and $N(j)$ denote the set of neighbors of $i$ and $j$, respectively. The set $N (i, j)$ is defined as the union of neighbors of $i$ and $j$
\[
N (i, j) = N(i) \cup N(j) - \{i, j\}
\]
Therefore, the total number of constraints in the formula is upper bounded by 
\begin{align*}
 & \sum_{i < j} d_i + d_j
= (n-1)\sum_{i=1}^{n}  d_i= O(mn)
\end{align*}
When the considered network is sparse, which is often true for complex networks, our new formulation substantially reduces time and memory requirements. For most real-world network instances, where $n \approx m$, the number of constraints is effectively reduced from $\theta(n^3)$ to $O(n^2)$. If we consider the time to solve linear programming to be cubic time the number of constraints, the total time complexity for sparse networks improves to $O(n^6)$ instead of $O(n^9)$ as in the original approach. In practice, LPs can be solved quite efficiently. We mention the increase of the size of the largest solved instance of traveling salesman problem from 49 cities in 1954 \cite{Dantzig54} to 85,9000 cities in 2009 \cite{Applegate09} as an example of rapid development of mathematical programming solvers and computer powers.

Again, we can obtain the relaxation of IP$_\mathrm{sparse}$, described in (\ref{IP:sparse:obj}) to (\ref{IP:sparse:range}), by replacing the constraints $d_{i,j} \in \{ 0, 1\}$ by $d_{i, j} \in [0, 1]$. We shall refer to this relaxation of IP$_\mathrm{sparse}$ as LP$_\mathrm{sparse}$. The fractional optimal solution of this relaxation can also be rounded and tuned with the same algorithms in the previous subsection.
\vspace{-0.05in}
\subsection{Correctness and Performance Guarantees}
In order to achieve the same guarantees provided by solving LP$_\mathrm{complete}$, we show the equivalence of the sparse formulation and the complete formulation:
\begin{itemize}  
  \item IP$_\mathrm{sparse}$ and IP$_\mathrm{complete}$ share the same set of optimal integral solutions (Theorem \ref{theo:ip_eq}).
  \item The optimal fractional solutions of LP$_\mathrm{sparse}$ and LP$_\mathrm{complete}$ have same objective values (Theorem \ref{theo:lp_eq}) i.e. they provide the same upper bound on the maximum possible modularity. 
\end{itemize} 
 Hence, solving LP$_\mathrm{sparse}$ indeed gives us an optimal solution of  LP$_\mathrm{complete}$, while doing so significantly reduces the time and memory requirements. 
 

\begin{theorem}
Two integer programmings IP$_\mathrm{sparse}$ and IP$_\mathrm{complete}$ share the same set of optimal solutions.
\label{theo:ip_eq} 
\end{theorem}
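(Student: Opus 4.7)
The plan is to prove that the two IPs have the same optimal value and the same set of optimal partitions. The easy direction is that every IP$_\mathrm{complete}$-feasible solution is automatically IP$_\mathrm{sparse}$-feasible (the sparse constraints are a subset of the full triangle inequalities) with the same objective, so $\mathrm{opt}(\mathrm{IP}_\mathrm{sparse}) \geq \mathrm{opt}(\mathrm{IP}_\mathrm{complete})$. The content of the theorem lies in the reverse direction: I intend to produce, from any IP$_\mathrm{sparse}$-integer-feasible $d$, an IP$_\mathrm{complete}$-integer-feasible $d'$ with $\mathrm{obj}(d') \geq \mathrm{obj}(d)$. This establishes the reverse inequality and also shows that every optimal integer solution of IP$_\mathrm{sparse}$ projects to an optimum of IP$_\mathrm{complete}$.

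The projection I would use is the following: let $G_d = (V, E_d)$ be the subgraph of $G$ on edges $E_d = \{(i,j) \in E : d_{i,j} = 0\}$ (the \emph{actual} graph edges declared intra-community), and define $d'_{i,j} = 0$ iff $i$ and $j$ lie in the same connected component of $G_d$. Since $d'$ is induced by a partition, it satisfies every triangle inequality and is IP$_\mathrm{complete}$-feasible.

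The key lemma is that whenever $i$ and $j$ lie in the same component of $G_d$, one already has $d_{i,j} = 0$. I would prove this by induction along a $G_d$-path $i = v_0, v_1, \ldots, v_l = j$. For $l = 1$ the conclusion is immediate from the definition of $E_d$. For $l \geq 2$ and each $s \in \{0, 1, \ldots, l-2\}$, we have $A_{v_s, v_{s+1}} = 1$ and $v_{s+1} \ne j$, hence $v_{s+1} \in N(v_s, j)$, so the sparse constraint $d_{v_s, v_{s+1}} + d_{v_{s+1}, j} \geq d_{v_s, j}$ is present; combined with $d_{v_s, v_{s+1}} = 0$ it yields $d_{v_{s+1}, j} \geq d_{v_s, j}$. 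Iterating over $s = 0, 1, \ldots, l-2$ gives $d_{v_{l-1}, j} \geq d_{i,j}$; but $v_{l-1} v_l \in E_d$ directly forces $d_{v_{l-1}, j} = 0$, so $d_{i,j} = 0$.

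With the lemma in hand, $d$ and $d'$ can differ only on pairs $(i,j)$ with $d_{i,j} = 0$ but $i,j$ in different components of $G_d$. For any such pair, $(i,j) \notin E_d$ combined with $d_{i,j} = 0$ forces $A_{i,j} = 0$, so $B_{i,j} = -d_i d_j /(2m) \leq 0$; flipping $d_{i,j}$ from $0$ to $d'_{i,j} = 1$ then changes the objective $\sum B_{i,j}(1 - d_{i,j})$ by $-B_{i,j} \geq 0$. Summing over all such pairs gives $\mathrm{obj}(d') \geq \mathrm{obj}(d)$, which completes the reverse inequality and identifies the optimal partitions of the two formulations. The main obstacle in the plan is selecting the correct graph on which to take connected components: the larger pair-graph $\{(i,j) : d_{i,j} = 0\}$ would not work, because a sparse triangle inequality only fires when the middle vertex is a $G$-neighbor of an endpoint, and restricting $G_d$ to actual edges of $G$ is exactly what makes every step of the path induction invoke an available sparse constraint.
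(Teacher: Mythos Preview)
Your argument is correct and arrives at the theorem by a genuinely different route than the paper. The paper defines $G_d$ on \emph{all} pairs $(i,j)$ with $d_{i,j}=0$, then uses optimality repeatedly (first to show each $G_d$-component induces a connected subgraph of $G$, then in a ``clique expanding'' induction) to conclude that an optimal sparse solution $d$ is itself a pseudo-metric. You instead restrict $G_d$ to the edges of $G$ with $d_{i,j}=0$ and run a path induction; because every step of your path is an edge of $G$, the middle vertex is always in $N(v_s,j)$ and the needed sparse inequality is guaranteed to be present. This is cleaner: your key lemma uses no optimality hypothesis at all, and the dominating partition $d'$ is produced for \emph{every} sparse-feasible $d$, not just optimal ones. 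The paper's approach, by contrast, interleaves optimality with the combinatorics at every stage.

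One small point to tighten: as written you only conclude that $\mathrm{obj}(d')\ge\mathrm{obj}(d)$ and that optima ``project'' to optima, which yields equality of optimal \emph{values} but not literally equality of optimal \emph{solution sets}. To finish, observe that for an optimum $d$ of IP$_\mathrm{sparse}$ one has $\mathrm{obj}(d')=\mathrm{obj}(d)$, while your difference $\mathrm{obj}(d')-\mathrm{obj}(d)$ is a sum of terms $-B_{i,j}=d_id_j/(2m)$, each strictly positive when there are no isolated vertices (the paper makes this standing assumption). Hence the sum has no terms, $d=d'$, and $d$ is itself IP$_\mathrm{complete}$-feasible. With that sentence added, your proof is complete and, arguably, more transparent than the paper's.
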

\begin{IEEEproof}
We need to show that every optimal solution of IP$_\mathrm{complete}$ is also a solution of IP$_\mathrm{sparse}$ and vice versa.
 
In one direction, since the constraints in IP$_\mathrm{sparse}$ is a subset of constraints in IP$_\mathrm{complete}$, every optimal solution of IP$_\mathrm{complete}$ will also be a solution of IP$_\mathrm{sparse}$.  

In the other direction, let $d_{i, j}$ be an optimal integral solution of IP$_\mathrm{sparse}$. We shall prove that $d_{i,j}$ must be a pseudo-metric that implies $d_{i,j}$ is also a feasible solution of IP$_\mathrm{complete}$.

For convenience, we assume that the original graph $G=(V, E)$ has no isolate vertices that were known to have no affects on modularity maximization \cite{Newman06}. Construct a graph $G_d = (V, E_d)$ in which there is an edge $(i, j)$ for every $d_{i, j}=0$. Let $\mathcal{C}_d = \{ C_d^1, C_d^2, \ldots,C_d^l \}$ be the set of connected components in $G_d$, where $C_d^t$ represents the set of vertices in $t$th connected components.
\begin{proposition}
Every connected component $C_d^i$ induces a \emph{connected} subgraph in $G=(V,E)$.
\label{prop:con}
\end{proposition}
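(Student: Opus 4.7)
The plan is to argue by contradiction: assume some component $C_d^i$ fails to induce a connected subgraph of $G$, and exhibit a feasible $d'$ for IP$_\mathrm{sparse}$ with strictly larger objective value than the supposedly optimal $d$, violating the optimality hypothesis.

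First, if $G[C_d^i]$ is disconnected, I partition $C_d^i = A \sqcup B$ with $A, B \ne \emptyset$ and \emph{no} edge of $G$ joining $A$ to $B$. Because $C_d^i$ is connected in $G_d$, any $G_d$-path from $A$ to $B$ must contain a step from some $a' \in A$ to some $b' \in B$, giving a pair $(a',b')$ with $d_{a',b'} = 0$. This specific pair will witness a strict improvement. I then define $d'$ by setting $d'_{a,b} = d'_{b,a} = 1$ for every $a \in A$, $b \in B$, and leaving all other coordinates equal to $d$.

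Next I verify feasibility. Only triangle inequalities involving at least one modified coordinate need checking. The central case is a constraint $d'_{a,k} + d'_{k,b} \geq d'_{a,b} = 1$ with $a \in A$, $b \in B$, and $k \in N(a) \cup N(b) \setminus \{a,b\}$. Because no $G$-edge crosses from $A$ to $B$, $N(a) \cap B = \emptyset$ and $N(b) \cap A = \emptyset$, so $k$ is either in $A$, in $B$, or outside $C_d^i$; in the first two cases one of the two summands on the left is a freshly-modified pair equal to $1$, while in the third case $k$ lies in a $G_d$-component different from that of $a$ and $b$, so integrality of $d$ gives $d_{a,k} = d_{k,b} = 1$. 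When instead $(i,k)$ or $(k,j)$ straddles $A \times B$, the left-hand side already exceeds $1 \geq d'_{i,j}$, so the constraint is trivial.

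Finally, since no $G$-edge crosses $A \times B$ we have $A_{a,b} = 0$, hence $B_{a,b} = -d_a d_b/(2m) < 0$ for every $a \in A$, $b \in B$ (using the standing assumption that $G$ has no isolated vertices). The net objective change equals $-\tfrac{1}{m}\sum_{a \in A,\, b \in B,\, d_{a,b} = 0} B_{a,b}$, which is strictly positive because the term from $(a',b')$ contributes a positive value and no term is negative. This contradicts the optimality of $d$ and proves the proposition. The main obstacle is the feasibility case analysis: it hinges crucially on the two structural facts that a $G$-neighbor of a vertex in $A$ cannot lie in $B$, and that every vertex outside $C_d^i$ has integral $d$-distance exactly $1$ to every vertex inside. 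Once these are noted, the inequalities check out mechanically.
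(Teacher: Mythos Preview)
Your argument is correct and follows the same contradiction strategy as the paper: split the component into two $G$-disconnected parts $A,B$ (the paper's $S,T$), raise all cross-pair distances to $1$, and observe that this strictly increases the objective while remaining feasible for IP$_\mathrm{sparse}$. The paper disposes of feasibility in a single sentence (``those triangle inequalities must involve at least one edge in the original graph $G$, while $S$ and $T$ are disconnected sets in $G$''), whereas you spell out the case analysis on the location of the middle vertex $k$; your treatment is more careful but not substantively different.
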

\begin{IEEEproof}
We prove by contradiction. Assume that the connected component $C_d^t$ does not induce a connected subgraph in $G$. Hence, we can partition $C_d^t$ into two subsets $S$ and $T$ so that there are no edges between $S$ and $T$ in $G$.

Construct a new solution $d'$ from $d$ by setting $d'_{i, j}=1$ for all pairs $(i, j) \in P(S, T)$, the set of \emph{pairs} with one end point in $S$ and one endpoint in $T$. Since, $A_{i, j} = 0\ \forall (i, j) \in P(S,T)$, we have $B_{i, j} = A_{i, j} - \frac{d_i d_j}{2m} < 0\ \forall (i, j) \in P(S, T)$. Hence, setting $d'_{i, j} = 1\ \forall (i, j) \in P(S,T)$ can only increase the objective value. In fact, doing so will strictly increase the objective. There must be at least one pair $(i, j) \in P(S,T)$ with $d_{i,j}=0$, or else $C_d^i$ is not a connected component in $G_d$. 

It is not hard to verify that $d'_{i, j}$ satisfy all constraints of IP$_\mathrm{sparse}$ since those triangle inequalities must involve at least one edge in the original graph $G$, while $S$ and $T$ are disconnected sets in $G$.

Thus, we have derived from an optimal solution a new feasible solution with higher objective (contradiction).
\end{IEEEproof} 

\begin{figure}
\centering
\includegraphics[width=0.26 \textwidth]{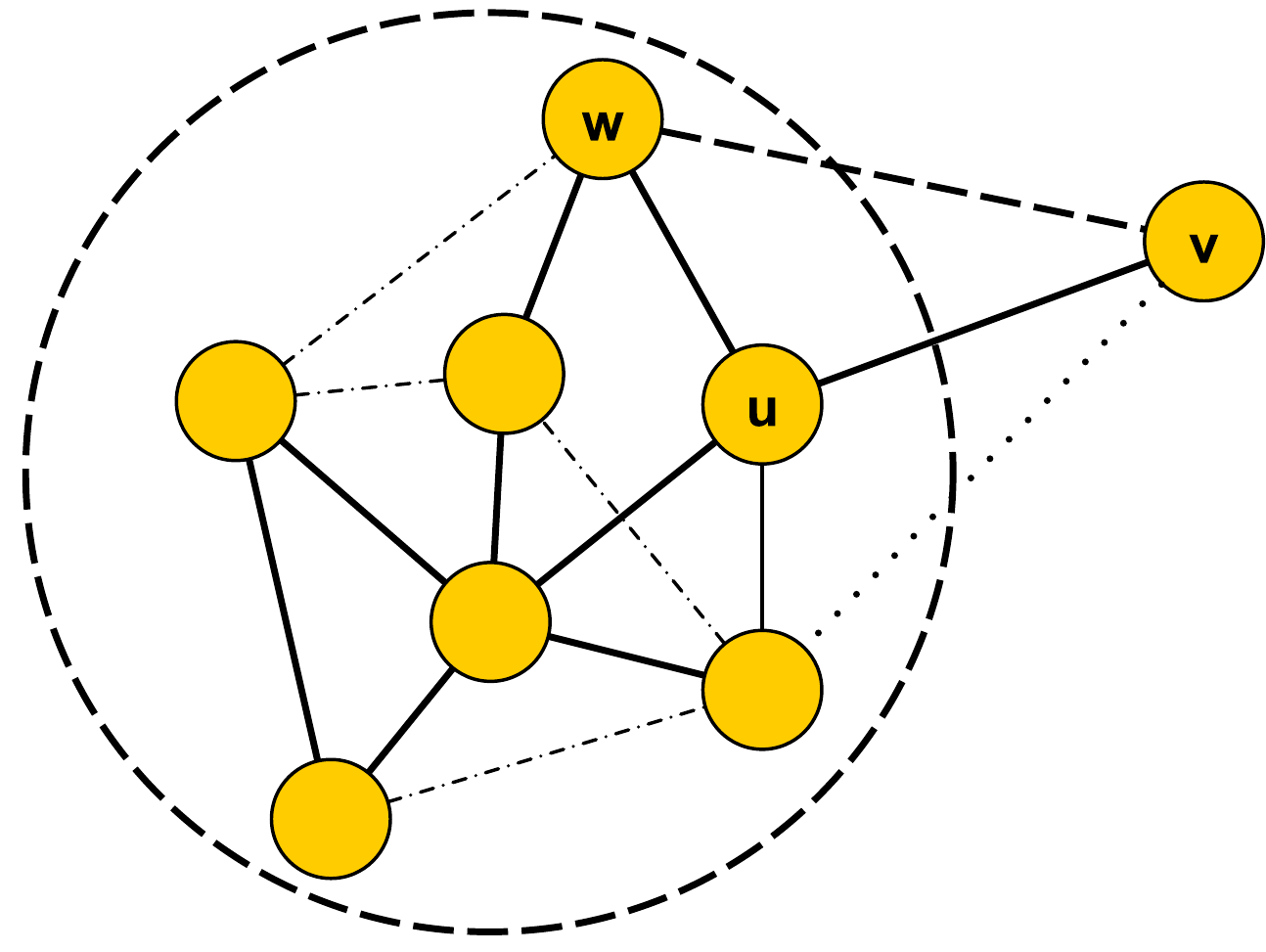}
\caption{Clique expanding process.}
\label{fig:clique}
\vspace{-0.15in}
\end{figure}

The rest is to prove that for each connected component $C_d^t$ of $G_d$,  if $i, j \in C_d^t$\ then the distance $d_{i, j} = 0$.  We prove by repeatedly applying a ``clique expanding'' process. At each step, every pair of vertices in the clique are proven to have distance $0$. Then, we expand the clique, adding one more adjacent vertex to the clique and prove that the new clique also has vertices of distance zero from each other (see Fig. \ref{fig:clique}). 

\emph{Initial step}. We first prove there is an edge $(i, j) \in E$ of the original graph $G$ satisfying $d_{i, j} = 0$. We shall choose that edge as our initial clique of size 2. Assume no such edge exists, all pairs $d_{i, j} = 0$ within $C_d^t$ have $A_{i,j}=0$ and $B_{i,j} < 0$. Thus, again we  can increase the distance of all pairs with $d_{i,j}=0$ to $1$ without violating any constraints, while increasing the objective value (contradiction). Therefore, we can always find an edge that belongs to  both $G$ and $G_d$.

\emph{Expanding steps}. Denote our clique by $K_t$. If $K_t = C_d^t$, then we can complete the proof for $C_d^t$. Otherwise, there is a vertex $u \in K_t$ and a vertex $v \in C_d^t-K_t$, so that $(u, v)$ is an edge in both $G$ and $G_d$ ($d_{u, v}=0$). The existence of such an edge $(u, v)$ can be proven by contradiction (Assume not, then increase distance of all pairs in $P(K_t, C_d^t - K_t)$ from $0$ to $1$ to increase the objective value while not violating any constraints.). Then, for each vertex $w \in K_t - \{ u \}$, the constraint $d_{w, u} + d_{u, v} \geq d_{w, v}$ is in IP$_\mathrm{sparse}$ and $d_{w, u} =0$ from the property of $K_t$. It follows that $d_{w, v} = 0$ for all $w \in K_t$. By adding $v$ to $K_t$ we increase the size of the clique, while ensuring the zero-distance property.
 
Since the size of $C_d^t$ is at most $n$, the expanding process will finally terminate with $K_t = C_d^t$.
\end{IEEEproof}

\begin{theorem}
\label{theo:lp_eq}
  LP$_\mathrm{sparse}$ and LP$_\mathrm{complete}$ share the same set of fractional optimal solutions.
\end{theorem}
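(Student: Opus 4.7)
The plan is to establish the equality of the optimal objective values of the two LPs, which is what the bullet immediately preceding the two theorems tells us the statement means (``they provide the same upper bound''). One direction, $\mathrm{OPT}(\mathrm{LP}_\mathrm{sparse}) \geq \mathrm{OPT}(\mathrm{LP}_\mathrm{complete})$, is immediate, because LP$_\mathrm{sparse}$ is obtained from LP$_\mathrm{complete}$ by dropping constraints: every LP$_\mathrm{complete}$-feasible $d$ is automatically LP$_\mathrm{sparse}$-feasible with the same objective value. The real content is the reverse inequality.

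For the reverse direction, I would take an arbitrary LP$_\mathrm{sparse}$-feasible $d$ and construct an LP$_\mathrm{complete}$-feasible $d''$ whose objective is no smaller than that of $d$. The lifting I have in mind is $d''_{i,j} = \min(1, d'_{i,j})$, where $d'_{i,j}$ is the shortest-path distance from $i$ to $j$ \emph{inside the original graph $G$} (not the complete graph $K_n$), using the LP values $d_{u,v}$ as weights on the edges $(u,v) \in E$. Shortest-path distances automatically satisfy the triangle inequality, and a short case analysis on whether each of $d'_{i,j}, d'_{i,k}, d'_{k,j}$ exceeds $1$ verifies that the truncation at $1$ preserves it; since clearly $d'' \in [0,1]$, the resulting $d''$ is a pseudo-metric and hence LP$_\mathrm{complete}$-feasible.

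The key technical step is a \emph{path bound}: for every LP$_\mathrm{sparse}$-feasible $d$ and every $G$-path $i = v_0, v_1, \ldots, v_r = j$,
\[
d_{i,j} \;\leq\; \sum_{\ell=0}^{r-1} d_{v_\ell, v_{\ell+1}}.
\]
I plan to prove this by induction on $r$. In the inductive step the single LP$_\mathrm{sparse}$ inequality $d_{i,v_1} + d_{v_1,j} \geq d_{i,j}$ is available, because $(i,v_1) \in E$ forces $v_1 \in N(i) \subseteq N(i,j)$; then applying the hypothesis to the tail $v_1 \to \cdots \to v_r$ closes the induction. Taking the minimum over all $G$-paths gives $d_{i,j} \leq d'_{i,j}$; for $(i,j) \in E$ the one-edge path also gives $d'_{i,j} \leq d_{i,j}$, so $d''_{i,j} = d_{i,j}$ on edges, while $d''_{i,j} \geq d_{i,j}$ on non-edges.

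Plugging in, the objective change is $\mathrm{obj}(d'') - \mathrm{obj}(d) = -\tfrac{1}{2m}\sum_{(i,j)\notin E} B_{i,j}\bigl(d''_{i,j} - d_{i,j}\bigr)$, a sum of non-negative terms because $B_{i,j} = -d_i d_j/(2m) \leq 0$ on non-edges (assuming no isolated vertices, as was done in the proof of Theorem~\ref{theo:ip_eq}) and $d''_{i,j} - d_{i,j} \geq 0$ there. This yields $\mathrm{OPT}(\mathrm{LP}_\mathrm{complete}) \geq \mathrm{OPT}(\mathrm{LP}_\mathrm{sparse})$ and completes the argument. The main obstacle will be settling on the correct lifting: the naive shortest-path in $K_n$ would decrease $d_{i,j}$ on some non-edge pairs, hurting the objective precisely where $B_{i,j} < 0$ and the LP prefers large distances. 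Restricting shortest paths to $G$ is what makes the path-bound lemma work, since that is exactly where the available LP$_\mathrm{sparse}$ inequalities chain together.
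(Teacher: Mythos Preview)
Your approach is essentially the same as the paper's: both lift an LP$_\mathrm{sparse}$ solution $d$ to the shortest-path metric in $G$ using the $d$-values as edge weights, truncate at $1$, prove the path bound by chaining the available triangle inequalities along a $G$-path, and use $B_{i,j}<0$ on non-edges to show the objective does not decrease. The only difference is framing: you stop at equality of optimal \emph{values}, whereas the paper (matching the theorem's literal statement) observes that if $d\neq d''$ the objective strictly increases, so any LP$_\mathrm{sparse}$ optimum must itself equal its lift and hence be LP$_\mathrm{complete}$-feasible---a one-line extension of your argument once you invoke the strict negativity of $B_{i,j}$ on non-edges.
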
  
\begin{IEEEproof}
We need to show that every \emph{fractional optimal} solution of LP$_\mathrm{complete}$ is also a \emph{fractional} solution of LP$_\mathrm{sparse}$ and vice versa. Since the integrality constraints have been dropped in both LP relaxations, we need a different approach to the proof in Theorem \ref{theo:ip_eq}. 
 
One direction is easy, every fractional optimal solution of  LP$_\mathrm{complete}$ is also a \emph{fractional} solution of LP$_\mathrm{sparse}$.
  
For the other direction, let $d_{i,j }$ be a fractional optimal solution of LP$_\mathrm{sparse}$, we shall prove that $d_{i, j}$ is also a feasible solution of LP$_\mathrm{complete}$.
 
Associate a weight $w_{i,j}=d_{i, j}$ for each edge $(i, j) \in E$ (other edges are assigned weights $\infty$). Let $d'_{i, j}$ be the distance between two nodes $(i, j)$ with the new edge weights. We have
\begin{enumerate}
  \item $d'_{i, j} \geq d_{i, j}$ for all $i, j$ and $d'_{i, j} = d_{i, j} \forall (i, j) \in E$. 
  \item $d'_{i, j} = \min_{k=1}^{n} \{ d'_{i, k} + d'_{k, j} \}$. Hence, $d'_{i,j}$ is a pseudo-metric.
\end{enumerate}
The first statement can be shown by applying the triangle inequalities in LP$_\mathrm{sparse}$.  Since, $d'_{i,j}$ be the shortest distance between $i$ and $j$ in $G$, there is a path $u_0 = i, u_1, \ldots,u_l = j$ with the length $d'_{i, j} = d_{u_0, u_1} + d_{u_1, u_2}+\ldots+d_{u_{l-1}, u_l}$. Since $(u_{k-1}, u_k)$ are edges in $G$ for all $k=1..l$, we can apply triangle inequalities iteratively  
\begin{align}
\nonumber
d_{i, j} &\leq d_{u_0, u_1} + d_{u_1, u_l} \leq d_{u_0, u_1} + d_{u_1, u_2}+d_{u_2, u_l}\\
&\leq \ldots \leq d_{u_0, u_1} + d_{u_1, u_2}+\ldots+d_{u_{l-1}, u_l} = d'_{i, j}
\end{align}
If $(i,j) \in E$, we have $d'_{i, j} \leq d_{i, j}$. Hence, $d'_{i,j} = d_{i, j }\ \forall (i,j) \in E$.
The second statement comes from the definition of $d'_{i, j}$. 

Notice that  $d'_{i, j}$ may be no longer upper bounded by one. Therefore, we define $d^*_{i, j}=\min\{ d'_{i, j}, 1\}$. We also have
\[
  d^*_{i, j} \geq d_{i, j}\ \forall i, j \mbox{ and }  d^*_{i, j} = d_{i, j}\ \forall (i, j) \in E.
\] 
And more importantly,  $d^*$ is also a pseudo-metric. Since $d^*_{i, k} + d^*_{k, j} \geq \min\{d'_{i, k} + d'_{k, j}, 1 \} \geq \min \{ d'_{i, j}, 1\} = d^*_{i, j}$.

Now, if $d_{i, j} = d^*_{i, j}$ for all $i, j$, then $d$ satisfies all triangle inequalities in LP$_\mathrm{complete}$ and we yield the proof.

Otherwise, assume that $d_{i, j} < d^*_{i, j}$ for some pair $(i, j)$. We show that $d^*$ is a feasible solution of LP$_\mathrm{sparse}$ with greater objective value that contradicts the hypothesis that $d$ is an optimal solution.

Since for all edges $(i, j) \notin E$, $d_{i,j} = d^*_{i, j}$, and for pairs $(i, j) \notin E$, $B_{i,j} < 0$\ and $d^*_{i, j} \geq d_{i,j}$, we have
 $\sum_{i,j} d^*_{i, j} > \sum_{i, j}d_{i, j}$ (contradiction). 
\end{IEEEproof}

\vspace{-0.01in}
\section{Approximation Algorithms for Maximizing Modularity in Power-law Networks}
\label{sec:approx}
This section presents approximation algorithms for the modularity maximization problem in power-law networks. A factor $\rho$ approximation algorithm for a maximization problem, find in polynomial- time a solution with the value no less than $\rho$ times the value of an optimal solution. Approximation algorithms are  being used for problems where exact polynomial-time algorithms are too expensive and in many cases, they can yield valuable insights to the problem.

We make a detour to focus on the problem of modularity maximization in division of the network into just two communities. The maximum modularity value of the division into two communities are shown to ``close'' to the best possible modularity. Thus, an approximation algorithm for the division into two communities problem also yields an approximation algorithm for the modularity maximization problem.
\vspace{-0.01in}
\subsection{Division into $k$ Communities}
Let  $Q_k$\ be the maximal modularity obtained by a division of the network into \textbf{exact} $k$\ communities. We also denote $Q_k^+= \max_{i=1}^k  Q_i$\ and $Q_\mathrm{opt} = Q_n^+$, the best possible modularity over all possible divisions. Let $\delta^\mathrm{opt}$\ be a community structure with the maximum modularity $Q_\mathrm{opt}$. 

\begin{proposition}
$Q_{1}=0$\ and  $Q_{n} = -\frac{ \sum_i {d_i^2} } {4m^2}$.
\end{proposition}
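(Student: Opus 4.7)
The plan is to prove both equalities directly from the definition of modularity in equation (\ref{def:modularity}), specializing the indicator $\delta_{ij}$ to the two extreme community structures.

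First I would handle $Q_1$. When the partition consists of a single community $\mathcal{C}=\{V\}$, we have $\delta_{ij}=1$ for every pair $(i,j)$, so
\begin{equation*}
Q_1 = \frac{1}{2m}\sum_{i,j}\left(A_{i,j}-\frac{d_i d_j}{2m}\right) = \frac{1}{2m}\sum_{i,j} B_{i,j}.
\end{equation*}
I would then invoke the property of the modularity matrix noted earlier in the preliminaries, namely that every row (equivalently column) of $B$ sums to zero, which immediately yields $\sum_{i,j} B_{i,j}=0$ and hence $Q_1=0$.

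Next I would handle $Q_n$. When each vertex forms its own community, $\delta_{ij}=1$ if and only if $i=j$, so only the diagonal terms of the double sum survive:
\begin{equation*}
Q_n = \frac{1}{2m}\sum_{i}\left(A_{i,i}-\frac{d_i^2}{2m}\right).
\end{equation*}
Because $G$ is assumed simple (no self-loops), $A_{i,i}=0$ for every $i$, and therefore
\begin{equation*}
Q_n = -\frac{1}{4m^2}\sum_i d_i^2,
\end{equation*}
as claimed.

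There is no real obstacle here; the whole proof is a one-line substitution for each case, with the only subtlety being to recall the zero-row-sum property of $B$ for the $Q_1$ case and the no-self-loop convention for the $Q_n$ case.
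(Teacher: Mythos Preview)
Your proof is correct and is exactly the straightforward substitution one would expect. The paper itself states this proposition without proof, so there is no ``paper's own proof'' to compare against; your argument fills that gap cleanly, using only the zero-row-sum property of $B$ (already noted in the preliminaries) and the simple-graph convention $A_{i,i}=0$.
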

\begin{lemma}
\label{lem:Q2}
\[
Q_k^+ \geq (1-\frac{1}{k}) Q_\mathrm{opt}
\]
\end{lemma}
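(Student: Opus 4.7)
The plan is to prove this by a randomized merging argument on an optimal partition. Let $\delta^{\mathrm{opt}} = \{C_1, C_2, \ldots, C_K\}$ realize $Q_{\mathrm{opt}}$. If $K \le k$, we are immediately done since $Q_k^+ \ge Q_K = Q_{\mathrm{opt}} \ge (1-\tfrac{1}{k})Q_{\mathrm{opt}}$ (using $Q_{\mathrm{opt}} \ge Q_1 = 0$). So assume $K > k$, and independently assign each community $C_i$ uniformly at random to one of $k$ bins $G_1,\ldots,G_k$. Declare the non-empty bins to be the new community structure $\mathcal{C}'$, which has at most $k$ parts.

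The key computation is the expected modularity of $\mathcal{C}'$ by linearity of expectation. For any pair $u,v$ lying in the same $C_i$, they remain in the same bin with probability $1$; for $u \in C_i$, $v \in C_j$ with $i \ne j$, they land in the same bin with probability $\tfrac{1}{k}$. Writing modularity via the matrix $B$,
\begin{align*}
\mathbb{E}[Q(\mathcal{C}')] &= \frac{1}{2m}\sum_{i}\sum_{u,v\in C_i} B_{uv} + \frac{1}{k}\cdot\frac{1}{2m}\sum_{i\ne j}\sum_{\substack{u\in C_i\\ v\in C_j}} B_{uv}.
\end{align*}
The first double sum equals $Q_{\mathrm{opt}}$ by definition. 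For the second, invoke the identity $\sum_{u,v} B_{uv} = 0$ (each row of $B$ sums to zero, as noted in Section~\ref{sec:prem}), which forces $\sum_{i\ne j}\sum_{u\in C_i, v\in C_j} B_{uv} = -2m\,Q_{\mathrm{opt}}$. Substituting gives $\mathbb{E}[Q(\mathcal{C}')] = Q_{\mathrm{opt}} - \tfrac{1}{k}Q_{\mathrm{opt}} = (1-\tfrac{1}{k})Q_{\mathrm{opt}}$.

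By the probabilistic method, there is a particular assignment for which $Q(\mathcal{C}') \ge (1-\tfrac{1}{k})Q_{\mathrm{opt}}$. Since $\mathcal{C}'$ has some number $k' \le k$ of non-empty parts, we conclude $Q_{k'} \ge Q(\mathcal{C}') \ge (1-\tfrac{1}{k})Q_{\mathrm{opt}}$, and hence $Q_k^+ = \max_{i=1}^{k} Q_i \ge Q_{k'} \ge (1-\tfrac{1}{k})Q_{\mathrm{opt}}$, as required.

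The only subtle point, and the one I would be most careful with, is the sign handling in the inequality $(1-\tfrac{1}{k})Q_{\mathrm{opt}} \le Q_{\mathrm{opt}}$ when $K \le k$; this relies on $Q_{\mathrm{opt}} \ge 0$, which is immediate from the trivial one-community partition having modularity $0$. Everything else is a clean two-line expectation calculation whose engine is the row-sum identity for $B$. No power-law assumption, spectral structure, or LP machinery is invoked; the argument is purely combinatorial and extends verbatim to any bilinear-form ``modularity-like'' objective whose defining matrix has zero row sums.
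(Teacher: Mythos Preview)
Your proof is correct and is essentially the same argument as the paper's: randomly merge the optimal communities into $k$ super-communities, compute the expected modularity via linearity of expectation and the identity $\sum_{i,j} B_{ij}=0$, and conclude by the probabilistic method. If anything, you are slightly more careful than the paper in handling the edge cases (the sign of $Q_{\mathrm{opt}}$ when $K\le k$, and the possibility that fewer than $k$ bins are non-empty).
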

\vspace{-0.05in}
\begin{proof}
If $\delta^\mathrm{opt}$ has at most $k$ communities, than we  have $Q_{k}^+ = Q_\mathrm{opt}$. Otherwise $\delta^\mathrm{opt}$ has more than $k$ communities.

We can rewrite the modularity as
\[
Q_\mathrm{opt} = \frac{1}{2m} \sum_{\delta^{opt}_{ij}=1}{B_{ij}}
\]

\vspace{-0.05in}
Construct a $k$-division of the network by randomly assigning communities in $\delta^\mathrm{opt}$ into one of $k$ new ``\emph{super}'' communities. Let $\delta^k$ denote the obtained partitioning.
If $\delta^\mathrm{opt}_{ij}=1$, then $\delta^k_{ij}=1$ i.e. all within intra-communities pairs remain within new ``super'' communities. All pairs $(i, j)$ with $\delta^\mathrm{opt}_{ij}=0$ (inter-community pairs) become intra-communities pairs 
with probability $1/k$. Hence, the contribution of a pair $(i, j)$\ with $\delta^\mathrm{opt}_{ij}=0$ to the expected modularity is $\frac{1}{k} B_{ij}$. Hence, the expected modularity of the $k$-division by randomly grouping communities will be
\begin{align*}
Q_{E}\ &=\quad \frac{1}{2m} \bigg(\sum_{\delta^{opt}_{ij}=1}{B_{i,j}}+ \frac{1}{k}\sum_{\delta^{opt}_{i,j}=0}{B_{i,j}} \bigg) \\
          &=\quad \frac{1}{2m} \left(1-\frac{1}{k}\right)\sum_{\delta^{opt}_{i,j}=1}{B_{i,j}} 
           = \left(1-\frac{1}{k}\right) Q_\mathrm{opt}                  
\end{align*}
In the second step, we have used the equality  $\sum_{ij} B_{i,j} = 0$ or
 equivalently $\sum_{\delta^\mathrm{opt}_{i,j}=1}B_{i,j} =
  -\sum_{\delta^\mathrm{opt}_{i,j}}B_{i,j}$. Therefore, we have $Q_k^+ \geq Q_{E} =\left( 1-\frac{1}{k}\right){Q_\mathrm{opt}}$.
\end{proof}

It follows from Lemma \ref{lem:Q2} that an approximation algorithm with a factor $\rho$ for maximizing $Q_2$\ will also be an approximation with a factor $2\rho$ to the modularity  maximization problem.

For a division of the network into two groups define
 \[
 x_{i} = \begin{cases} 1, & \mbox{if } i \mbox{ belong to community } 1 \\ 
       -1, & \mbox{if } i \mbox{ belong to community } 2.\end{cases}
 \]

We can write the modularity for the division into two communities as 
\[
Q = \frac{1}{4m}\displaystyle\sum_{i, j} B_{i,j}(x_i x_j + 1)             \\
        = \frac{1}{4m}\displaystyle\sum_{i, j} B_{i,j}x_i x_j \\
        = \frac{1}{4m} x^{\mathrm{T}} B x
\] 
Hence, the division into two communities is a special case of the maximizing quadratic program problem i.e. the problem of finding
a vector $x \in \{-1, 1\}^n$\ such that  $x^{\mathrm{T}} B x$ is maximized. The following results was due to M. Charikar et al. \cite{Charikar04} and Nesterove et al. \cite{Nesterove97}.
\begin{theorem}\cite{Charikar04}
\label{theo:quad}
Given an arbitrary matrix $A$, whose diagonal elements are nonnegative, the problem of finding
$x \in \{-1, 1\}^n$\ such that  $x^{\mathrm{T}} B x$\ is maximized can be approximated within $O(\log n)$. In case $B$\ is positive definite, the ratio can be improved to $\frac{\pi}{2}$\ \cite{Nesterove97}.
\end{theorem}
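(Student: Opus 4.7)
The plan is to solve a semidefinite programming (SDP) relaxation of the quadratic program and then round its optimum to the hypercube $\{-1,1\}^n$. First I would lift each scalar $x_i \in \{-1,1\}$ to a unit vector $v_i \in \mathbb{R}^n$ and replace each product $x_i x_j$ by the inner product $v_i \cdot v_j$. The resulting vector program
\begin{equation*}
\max \sum_{i,j} B_{ij}\, v_i \cdot v_j \quad \text{subject to } \|v_i\| = 1,
\end{equation*}
is equivalent to an SDP on the Gram matrix $X = (v_i \cdot v_j)$ with $X \succeq 0$ and $X_{ii} = 1$, and can therefore be solved to arbitrary precision in polynomial time by interior-point methods. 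Its optimum $\mathrm{OPT}_{\mathrm{SDP}}$ upper bounds the integer optimum $\mathrm{OPT}$, which the assumption $B_{ii} \geq 0$ ensures is nonnegative so that the approximation ratio is well defined.

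For the positive-definite case, I would follow Nesterov and use Goemans--Williamson hyperplane rounding: draw $r \sim \mathcal{N}(0, I_n)$ and set $x_i = \mathrm{sgn}(r \cdot v_i)$, which gives $\mathbb{E}[x_i x_j] = \tfrac{2}{\pi}\arcsin(v_i \cdot v_j)$. The heart of the argument is Nesterov's inequality, valid for any PSD matrix $B$ with nonnegative diagonal:
\begin{equation*}
\sum_{ij} B_{ij}\arcsin(v_i \cdot v_j) \geq \sum_{ij} B_{ij}\, v_i \cdot v_j.
\end{equation*}
This follows from the Taylor expansion $\arcsin(t) - t = \sum_{k\geq 1} c_k\, t^{2k+1}$ with all $c_k \geq 0$, combined with the fact that for each $k$ the Hadamard power $\bigl((v_i \cdot v_j)^{2k+1}\bigr)_{ij}$ is PSD, so each summand $\sum_{ij} B_{ij}(v_i \cdot v_j)^{2k+1}$ is nonnegative when $B$ is PSD. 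Taking expectations yields $\mathbb{E}[x^T B x] \geq \tfrac{2}{\pi}\,\mathrm{OPT}_{\mathrm{SDP}} \geq \tfrac{2}{\pi}\,\mathrm{OPT}$, hence the ratio $\pi/2$.

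For the general (indefinite) case, plain hyperplane rounding can flip the sign of the objective, so I would invoke the Charikar--Wirth framework. Starting from the same SDP, one rounds by a thresholded projection: sample $r \sim \mathcal{N}(0, I_n)$, compute $y_i = (r \cdot v_i)/T$ for a scale $T = \Theta(\sqrt{\log n})$, and set $x_i = \mathrm{sgn}(y_i)$ if $|y_i| \leq 1$ and $x_i = y_i$ otherwise (followed by independent randomized rounding to $\{-1,1\}$). The main obstacle, and the source of the $\log n$ loss, is the analysis of pairs whose SDP inner products $v_i \cdot v_j$ are of moderate magnitude, where $\mathbb{E}[x_i x_j]$ need not have the right sign against $B_{ij}$. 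The standard device is to partition the pairs into $O(\log n)$ bands by the magnitude of $|v_i \cdot v_j|$, show that on each band the scaled-threshold rounding preserves a constant fraction of that band's SDP contribution (using $B_{ii} \geq 0$ to absorb the diagonal), and then pay an $O(\log n)$ factor in summing over bands, which yields the $O(\log n)$ approximation.
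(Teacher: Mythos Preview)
The paper does not prove this theorem; it is quoted from \cite{Charikar04} and \cite{Nesterove97} and used as a black box inside the proof of Theorem~\ref{theo:approx}. So there is no ``paper's own proof'' to compare against, and your sketch is really an attempt to reconstruct those external results.

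Your outline of the Nesterov $\pi/2$ bound is correct: SDP lift, random-hyperplane rounding, and the $\arcsin$ Taylor-series argument via the Schur product theorem is exactly the standard proof.

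The Charikar--Wirth part has two problems. First, the truncation rule is stated backwards: one keeps $y_i$ when $|y_i|\le 1$ and replaces it by $\mathrm{sgn}(y_i)$ when $|y_i|>1$, not the other way around; as written your intermediate variables are not even in $[-1,1]$, so the subsequent randomized rounding step is undefined. Second, and more substantively, the analysis you describe (partitioning pairs into $O(\log n)$ bands by $|v_i\cdot v_j|$ and preserving a constant fraction on each band) is not how the $O(\log n)$ factor arises and does not obviously work, since the rounding is a single global step and within any band the coefficients $B_{ij}$ can have mixed signs. The actual argument is direct: for $i\neq j$ one shows $\mathbb{E}[y_i y_j] = (v_i\cdot v_j)/T^2$ up to a truncation error that is uniformly $O(e^{-T^2/2})$ by Gaussian tail bounds; choosing $T=\Theta(\sqrt{\log n})$ makes the aggregate error polynomially small relative to the SDP value, and the nonnegative diagonal is used only to ensure $\mathrm{OPT}\ge 0$ (and to let the $x_i^2=1$ diagonal dominate the $y_i^2$ diagonal). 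The $O(\log n)$ loss is exactly the $1/T^2$ scaling, not a sum over bands.
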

Unfortunately, the matrix $B$ is not positive definite. Even worse, the main diagonal contains all negative entries as the $i$th entry is $-\frac{d_i^2}{4m^2}$. Hence, we cannot directly apply above results for the division into two communities problem.
\vspace{-0.1in}
\subsection{Power-law Networks}
Complex networks including social, biological, and technology networks display a non-trivial topological feature: their degree sequences can be well-approximated by a power-law distribution \cite{Girvan02}. At the same time they exhibit modular property i.e. the existence of naturally division into communities. We establish the connection between the power-law degree distribution property and the modular property, stating that whenever a network have power-law degree distribution, there is presence of communities in the network with a significant modularity.  


We use the well-known  $P(\alpha, \beta)$\ model by  F. Chung and L. Lu \cite{Aiello:massive} for power-law networks  in which there are $y$\ vertices of degree $x$, where $x$\ and $y$\ satisfy
$
\log y = \alpha - \beta \log x
$. In other words,
\[ |\{ v: d(v) = x\}| = y = \frac{e^\alpha}{x^\beta} \]

Basically, $\alpha$\ is the logarithm of the size of the graph ($n=e^\alpha$) and $\beta$\ is the log-log growth rate of the graph. While the scale of the network depends on $\alpha$,  $\beta$ decides the connection pattern  and many other important characterizations of the network. Different networks at different scales with same $\beta$ often exhibit same characteristics. For instance, the larger $\beta$, the sparser and the more ``power-law'' the network is. Hence, $\beta$ is regarded as a constant in $P(\alpha, \beta)$ model.

In $P(\alpha,\beta)$ model, the maximum degree in a $P(\alpha, \beta)$\ graph is $e^\frac{\alpha}{\beta}$. The number of vertices and edges are
\noindent\begin{align}
\nonumber n =\displaystyle\sum_{x=1}^{e^{\frac{\alpha}{\beta}}}\frac{e^\alpha}{x^\beta} \approx \left\{
\begin{array}{ll}
 \zeta(\beta)e^\alpha & \mbox{if } \beta>1\\
 \alpha e^\alpha & \mbox{if } \beta =1 \\
 \frac{e^{\frac{\alpha}{\beta}}}{1-\beta} & \mbox{if } \beta < 1
\end{array}
\right.
,\\ 
\label{eq:powerlaw}
 m = \frac{1}{2}\displaystyle\sum_{x=1}^{e^{\frac{\alpha}{\beta}}}x\frac{e^\alpha}{x^\beta} \approx \left\{
\begin{array}{ll}
 \frac{1}{2}\zeta(\beta-1)e^\alpha & \mbox{if } \beta >2\\
 \frac{1}{4}\alpha e^\alpha & \mbox{if } \beta = 2\\
 \frac{1}{2}\frac{e^{\frac{2\alpha}{\beta} } }{2-\beta} & \mbox{if }  \beta < 2
\end{array}
\right.
\end{align}
where 
$\zeta(\beta) = \sum_{i=1}^{\infty}{\frac{1}{i^\beta}}$\ is the Riemann Zeta function. Without affecting the conclusions, we will simply use real number instead of rounding down to integers. The error terms can be easily bounded
and are sufficiently small in our proofs.

\begin{figure}
\centering
\subfloat[Following algorithm]{
\includegraphics[width=0.24 \textwidth]{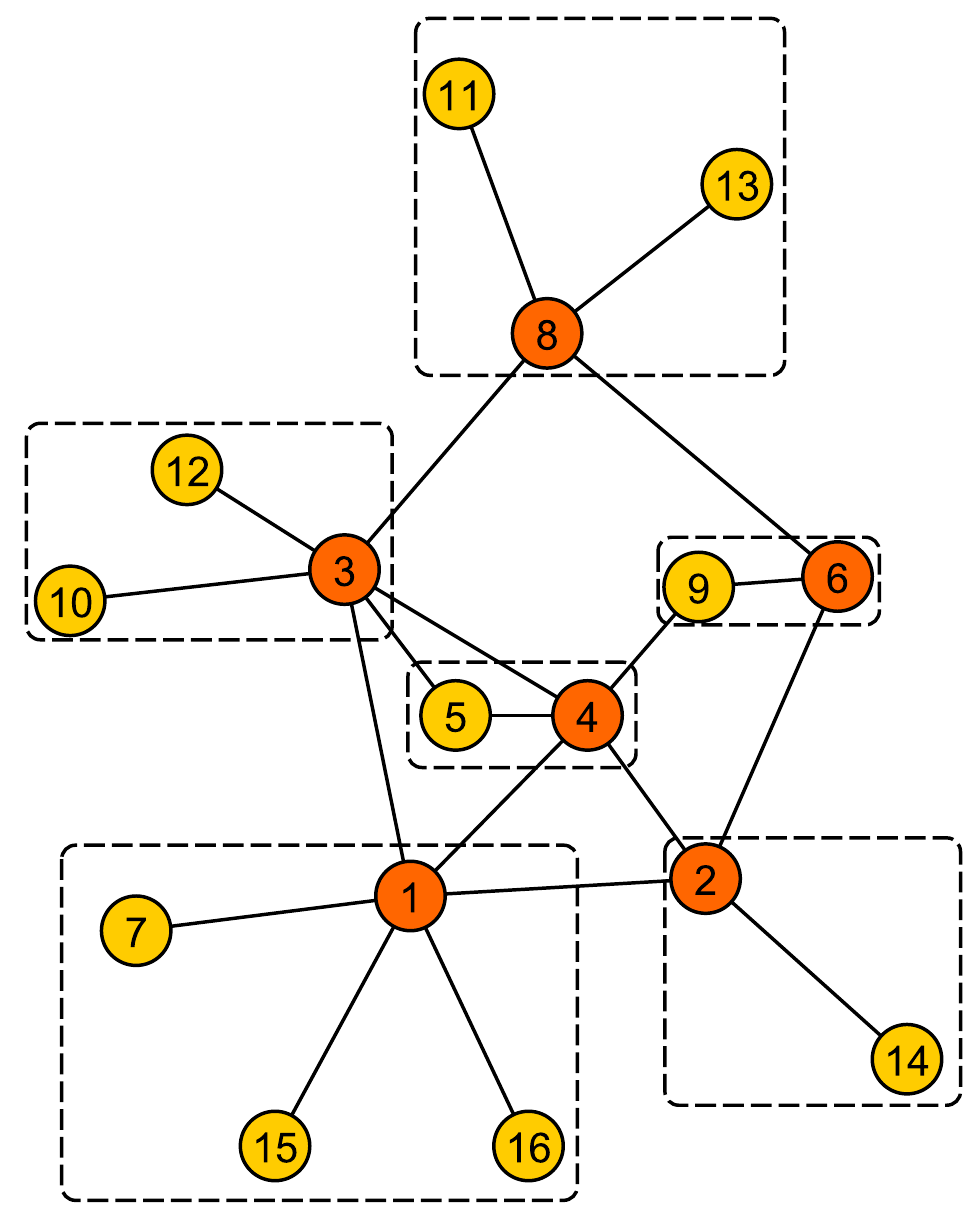}
\label{fig:forest} 
}
\subfloat[Optimal community structure] {
\includegraphics[width=0.24 \textwidth]{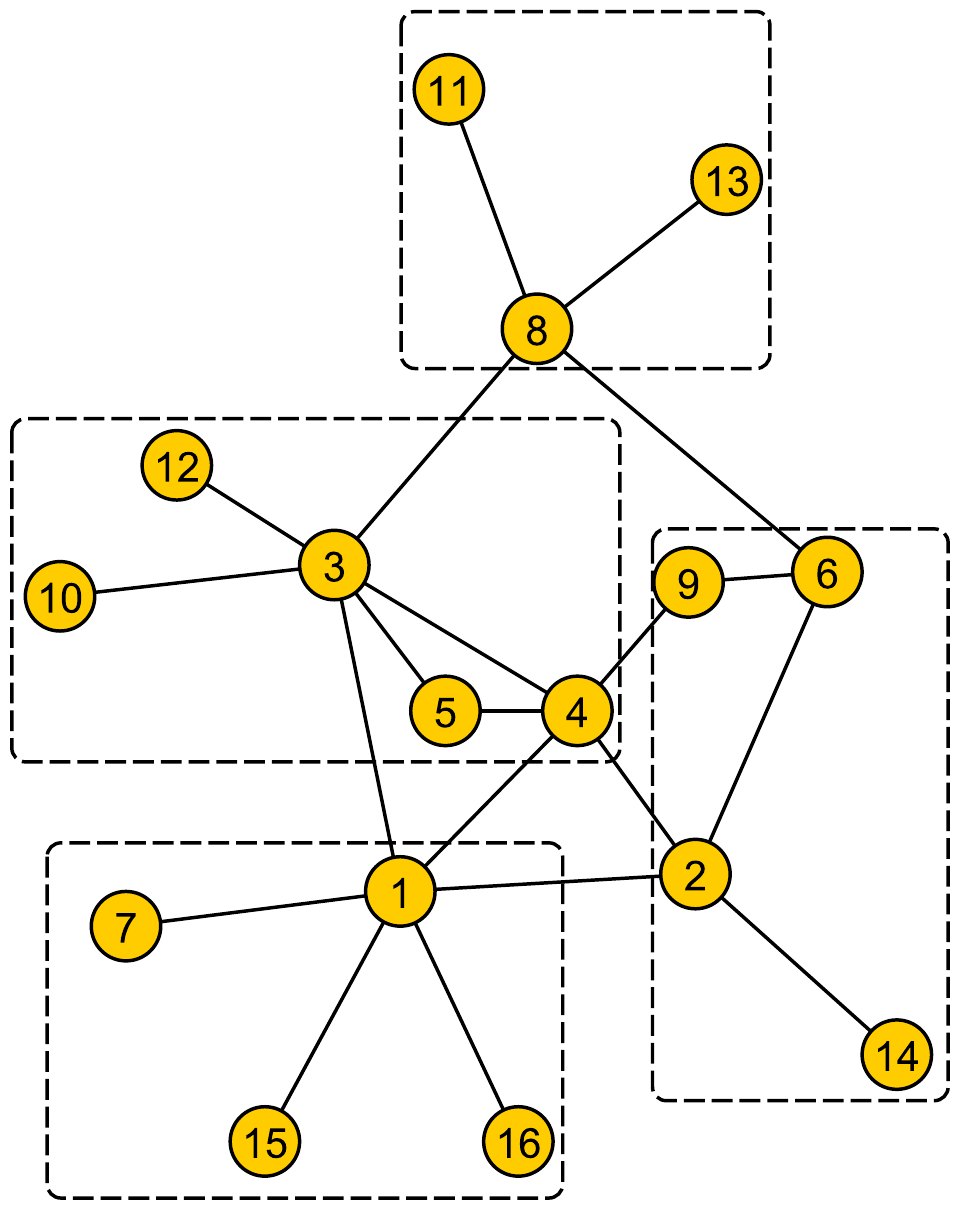}
\label{fig:forest_opt}
}
\caption{On the left, a community structure found by Following Algorithm in Theorem \ref{theo:approx} when $d_0=2$. Each rounded square represents a community. and followees are in the darker color. The modularity is 0.325 i.e. 87\% of the optimal modularity, 0.374. On the right, the optimal community structure found by solving IP$_\mathrm{sparse}$.}
\vspace{-0.15in}
\end{figure}
Most real-world networks have the log-log growth rate $\beta$ between $2$ and $3$. For examples, scientific collaboration networks with $2.1 < \beta <2.45$ \cite{Barabasi02}, Word Wide Web with $\beta$ for in-degree and out-degree of $2.1$ and $2.45$, respectively \cite{Albert00}; Internet at router and intra-domain level with $\beta=2.48$ and so on. No power-law networks with $\beta <1$ have been observed. One of the reason is that when $\beta < 1$, the number of edges $m = \Omega(n^2)$ i.e.  the network is not ``scale-free''.

\begin{theorem}
\label{theo:approx}
There is an $O(\log n)$ approximation algorithm for the  modularity maximization problem in 
power-law networks with the log-log growth rate $\beta > 1$. If $\beta > 2$, the problem can be approximated within a constant approximation factor $2\zeta(\beta-1)$, where $\zeta(x) = \sum_{i=1}^{\infty}{\frac{1}{i^x}}$\ is the Riemann Zeta function.
\end{theorem}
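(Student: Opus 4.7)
The plan is to handle the two regimes separately. For the first half ($\beta > 1$), I would reduce to the two-community problem via Lemma~\ref{lem:Q2} with $k=2$: any $\rho$-approximation for $Q_2$ yields a $2\rho$-approximation for $Q_{\mathrm{opt}}$. The two-community problem is precisely the quadratic program $\max_{x\in\{-1,1\}^n} \tfrac{1}{4m}\, x^{\mathrm{T}} B\,x$, into which I would like to plug Theorem~\ref{theo:quad}. The obstruction already flagged in the text is that $B$ has strictly negative diagonal $B_{ii}=-d_i^2/(2m)$, so Theorem~\ref{theo:quad} does not apply verbatim.

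To sidestep this I would use the identity $x^{\mathrm{T}} B x = x^{\mathrm{T}}(B+\lambda I)x - \lambda n$, valid for $x\in\{-1,1\}^n$, choosing $\lambda = d_{\max}^2/(2m)$ so that $B+\lambda I$ has nonnegative diagonal. Applying Theorem~\ref{theo:quad} then yields $y^*$ with $y^{*\mathrm{T}}(B+\lambda I)y^* \geq c(\mathrm{OPT}_B+\lambda n)/\log n$ for some constant $c$, and subtracting the shift gives $y^{*\mathrm{T}} B y^* \geq c\,\mathrm{OPT}_B/\log n - (1-c/\log n)\lambda n$. This is an $\Omega(1/\log n)$ approximation to $\mathrm{OPT}_B$ provided $\lambda n = O(\mathrm{OPT}_B)$, and this is where the power-law assumption enters: plugging $d_{\max}=e^{\alpha/\beta}$ together with the scalings of $n$ and $m$ from (\ref{eq:powerlaw}) shows that for every $\beta>1$ the quantity $\lambda n$ is dominated by the modularity value achievable by an explicit partition, e.g.\ one separating the top-degree vertices from the rest. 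Combined with the Lemma~\ref{lem:Q2} reduction this establishes the $O(\log n)$ factor.

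For the sharper constant $2\zeta(\beta-1)$ when $\beta>2$, I would abandon the quadratic program altogether and analyze the combinatorial \emph{Following Algorithm} depicted in Figure~\ref{fig:forest}: fix a threshold $d_0$, let each vertex of degree at most $d_0$ follow one of its highest-degree neighbors, and form one community per followee consisting of that followee together with all its followers (a star). Each star with center degree $d_c$ and $k_c$ followers contributes $k_c/m - (d_c+k_c)^2/(4m^2)$ to the modularity; summing these contributions across all centers, using the explicit power-law counts $|\{v:d(v)=x\}| = e^{\alpha}/x^{\beta}$ from (\ref{eq:powerlaw}), gives a closed-form lower bound on $Q_{\mathrm{alg}}$. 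Comparing against the trivial upper bound $Q_{\mathrm{opt}}\leq 1$ and using $2m\approx \zeta(\beta-1)e^{\alpha}$ for $\beta>2$ produces exactly the ratio $1/(2\zeta(\beta-1))$.

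The conceptually hardest step is controlling the additive shift $\lambda n$ in the logarithmic bound, because it requires a matching \emph{lower} bound on $Q_{\mathrm{opt}}$ for power-law graphs; this is where the assumption $\beta>1$ is actually consumed (when $\beta\leq 1$ the graph has $m=\Omega(n^2)$ edges and the reduction breaks). The constant-factor argument for $\beta>2$ is easier in spirit but requires patience with the truncated zeta-function sums and a careful choice of $d_0$; the accompanying figure suggests $d_0=2$ already suffices.
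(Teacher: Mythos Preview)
Your constant-factor argument for $\beta>2$ is exactly the paper's: run the Following Algorithm, lower-bound its modularity using the power-law counts, and compare to $Q_{\mathrm{opt}}\le 1$. (The paper in fact uses $d_0=1$, not $d_0=2$, but that is cosmetic.)

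The $O(\log n)$ argument, however, has a real gap in two places. First, the sufficiency claim is wrong as stated: from
\[
y^{*\mathrm T}B\,y^{*}\;\ge\;\frac{c}{\log n}\,\mathrm{OPT}_B-\Bigl(1-\frac{c}{\log n}\Bigr)\lambda n
\]
you do \emph{not} get an $\Omega(1/\log n)$ approximation under the hypothesis $\lambda n=O(\mathrm{OPT}_B)$; the second term swamps the first. You actually need $\lambda n=o(\mathrm{OPT}_B/\log n)$. Second, and more damaging, your choice $\lambda=d_{\max}^{2}/(2m)$ is far too coarse. Shifting by $\lambda I$ costs you $\lambda n/(4m)=d_{\max}^{2}n/(8m^{2})$ in modularity, whereas the paper simply \emph{zeroes} the diagonal (valid because $x_i^{2}=1$ on the Boolean cube), paying only $D=\sum_i d_i^{2}/(8m^{2})$. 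The ratio of the two shifts is $d_{\max}^{2}n/\sum_i d_i^{2}\asymp e^{\alpha(\beta-1)/\beta}$, which blows up for every $\beta>1$. Concretely, for $1<\beta<2$ one has $d_{\max}^{2}n/(8m^{2})\asymp e^{\alpha(1-2/\beta)}$, which is the \emph{same} order as the Following-Algorithm lower bound $e^{\alpha}/(2m)$ on $Q_{\mathrm{opt}}$; after multiplying by $\log n\asymp\alpha$ your additive loss dominates and the approximation collapses.

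The fix is exactly what the paper does: replace $B$ by $B_0$ (diagonal zeroed) rather than $B+\lambda I$, so that the additive term is $D$ and the task becomes showing $D\log n=o(Q_{\mathrm{opt}})$. For that lower bound on $Q_{\mathrm{opt}}$ the paper again uses the Following Algorithm with $d_0=1$ (not a high/low-degree split), obtaining $Q(\mathcal L)\ge e^{\alpha}/(2m)-8D$; the case analysis in $\beta$ then goes through. So the Following Algorithm is doing double duty in the paper: it is the constant-factor algorithm for $\beta>2$ \emph{and} the witness that $Q_{\mathrm{opt}}$ is large enough for the diagonal term to be absorbed in the $O(\log n)$ regime.
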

\begin{proof}
From Lemma (\ref{lem:Q2}) with $k=2$, we have $\frac{1}{2} Q_\mathrm{opt} \leq Q_2^+$. Hence, it is sufficient to approximate $Q_2^+$\ within a factor of $O(\log n)$.

We have
\begin{align}
\label{eq:diagonal}
\nonumber Q_2^+ &=\frac{1}{4m}\ \displaystyle \max_{x \in \left\{-1, 1 \right\}^n } x^T B x \\
&=\frac{1}{4m}\  \displaystyle \max_{x \in \left\{-1, 1 \right\}^n } x^T B_0 x  - \sum_{i=1}^{n} \frac{d_i^2}{8m^2},
\end{align}
where $B_0$ is obtained by replacing the diagonal of $B$\ with zeros.

\begin{table}[tbp]
  \centering
  \caption{Order and size of network instances}
    \begin{tabular}{ccrr}
    \addlinespace
    \toprule
    Problem ID & Name  & Nodes n & Edges m \\
    \midrule
    1     & Zachary's karate club & 34    & 78 \\
    2     & Dolphin's social network      &  62      &  159\\
    3     & Les Miserables      &    77   &  254 \\
    4     & Books about US politics      &  105    & 441  \\
    5     & American College Football     &    115  & 613  \\
    6     & US Airport 97      &  332    &   2126\\
    7     & Electronic Circuit (s838)      &   512   &  819  \\
    8     & Scientific Collaboration      & 1589     & 2742   \\            
    \bottomrule
    \end{tabular}%
  \label{tab:sum}%
  \vspace{-0.2in}
\end{table}%

Let $D = \sum_{i=1}^{n} \frac{d_i^2}{8m^2}$, the second term in equation (\ref{eq:diagonal}). We can approximate 
\[ 
\mathrm{OPT}_0=\displaystyle \max_{x \in \left\{-1, 1 \right\}^n } x^T B_0 x = Q^+_2 + D
\]
within a factor of $O(\log n)$\ by the method in Theorem \ref{theo:quad}. That means we can find a division of the network into two communities with the modularity is at least 
\begin{align*}
&\frac{c}{\log n} \mathrm{OPT}_0 - D = \frac{c}{\log n} (Q^+_2 + D) - D\\
\geq \quad  &\frac{c}{\log n} Q^+_2 - D \geq \frac{c}{2\log n} Q_\mathrm{opt} - D
\end{align*}
where $c$ is an independent constant.

If we can show that $D  = o\left( \frac{1}{\log n}\mathrm{OPT}_0 \right)$, then we can approximate the maximum modularity within  a factor $O(\log n)$. This is equivalent to 
\begin{align}
\label{eq:lim}
\displaystyle\lim_{n \rightarrow \infty} \frac{Q_\mathrm{opt}}{D \log n} = \infty \mbox{ or }
\displaystyle\lim_{\alpha \rightarrow \infty} \frac{Q_\mathrm{opt}}{D \log n} = \infty
\end{align}

To show (\ref{eq:lim}), we present a linear-time algorithm, called \emph{Following}, to find a community structure $\mathcal{L}$ with a lower bound on the modularity. An illustration example for the algorithm is shown in Fig. \ref{fig:forest}.

\vspace{0.1in}
\noindent\framebox[0.49\textwidth]{  
\centering  
\begin{minipage}{.46\textwidth}  
\textbf{Following Algorithm} ( Parameter $d_0 \in \mathbb{N}^+$)
\begin{enumerate}[i.]
\item Start with all nodes unlabeled
\item Sort nodes in \emph{non-decreasing order of degree}
\item For each unlabeled node $v$ with $d_v\leq d_0$, find a neighbor $u$ that is not a follower; set $v$ to follow $u$ i.e. label $v$ ``follower'' and $u$ ``followee''. If many such $u$ exist, select the one with the minimum degree.
\item Label all unlabeled nodes ``followee''.
\item Put each followee and its followers into a community.
\end{enumerate}
\end{minipage}
} 

\vspace{0.05in}

Despite that higher values of $d_0$  possibly lead to better approximation ratios, it is sufficient for our proof to consider only the case $d_0=1$. That means all leaf nodes will attach to (follow) their neighbors. Assume that for a graph $G=(V,E)$, vertices in $V$ are numbered so that leaf nodes will have higher numbering than non-leaf nodes i.e. $V = \{v_1, v_2,\ldots,v_t, \underbrace{ v_{t+1},\ldots,v_n}_{\rm leaf\ nodes} \}$\ in which $t$ is the number of non-leaf nodes. For a node $v_i, i=1\ldots t$, let $l_i \leq d_i$\ be the number of leaves attached to $v_i$. There will be $t$\ communities associated with $v_1, v_2,\ldots,v_t$, respectively. 

Since there are $e^\alpha$ vertices of degree one, there are at least $\frac{1}{2} e^\alpha$\ edges inside considered communities. Hence, 
\begin{align}
\label{eq:qleaves}
\nonumber Q(\mathcal{L}) &= \frac{e^\alpha}{2m} - \sum_{i=1}^{t} \frac{\left(d_i+l_i\right)^2}{4m^2}
\geq \frac{e^\alpha}{2m} - \sum_{i=1}^{n} \frac{4d_i^2}{4m^2}\\
&= \frac{e^\alpha}{2m} - 8D
\end{align}
Since $Q_\mathrm{opt} \geq Q(\mathcal{L})$, instead of showing (\ref{eq:lim}), we can show 
\[
\displaystyle\lim_{\alpha \rightarrow \infty} \frac{Q(\mathcal{L})}{D \log n} = \infty
\Leftrightarrow \displaystyle\lim_{\alpha \rightarrow \infty} \frac{e^\alpha/2m}{D \log n} = \infty
\]

From the power-law degree distribution in (\ref{eq:powerlaw}): 
\begin{align}
\label{eq:D}
D = \displaystyle\sum_{x=1}^{e^{\frac{\alpha}{\beta}}}\frac{e^\alpha}{x^\beta} \frac{x^2}{8 m^2} = \frac{e^\alpha}{8m^2} \displaystyle\sum_{x=1}^{e^{\frac{\alpha}{\beta}}}x^{2-\beta}
\end{align}
Consider all three cases of $\beta$:

\textbf{Case $\beta > 2$}: Since $x^{2-\beta} < 1$, from equation (\ref{eq:powerlaw}) we have
\begin{align}
\nonumber Q(\mathcal{L})&\geq  \frac{e^\alpha}{2m} - 8D
\geq \frac{1}{\zeta(\beta-1)} - \frac{4e^{\frac{\alpha}{\beta}}}{\zeta(\beta-1)^2 e^\alpha}\\
&\geq \frac{1}{2\zeta(\beta-1)}
\end{align} 
Since $Q_\mathrm{opt} \leq 1$, community structure $\mathcal{L}$ approximate the optimum solutions within a constant factor $2\zeta(\beta-1)$.

\textbf{Case $\beta  = 2$}: We have $\log n < 2 \alpha$. Hence, 
\[
D \log n \leq \frac{2e^\alpha}{\alpha^2 e^{2\alpha}} \bigg(\displaystyle\sum_{x=1}^{e^{\frac{\alpha}{\beta}}}1\bigg) 2 \alpha = \frac{4 e^{\alpha/\beta}} {\alpha e^\alpha}  
\]
Thus,
\[
\displaystyle\lim_{\alpha \rightarrow \infty} \frac{e^\alpha/2m}{D \log n} 
\geq
\displaystyle\lim_{\alpha \rightarrow \infty} \frac{e^\alpha}{2e^{\alpha/\beta}}
= \infty
\]
Hence, the modularity maximization problem can be approximated within a factor $O(\log n)$ in this case.

\textbf{Case   $2 > \beta  > 1$}:  
\begin{align}
\nonumber D\log n \leq &\quad \frac{e^\alpha}{8m^2} e^{\frac{\alpha}{\beta}(3-\beta)}
\displaystyle\sum_{x=1}^{e^\frac{\alpha}{\beta}}
\left( \frac{x}{e^\frac{\alpha}{\beta}} \right)^{2-\beta}
 \frac{1}{e^\frac{\alpha}{\beta}} 2 \alpha\\
\nonumber \leq&\quad  \frac{2\alpha e^\alpha}{\frac{2}{(2-\beta)^2} e^\frac{4\alpha}{\beta} } 
e^{\frac{\alpha}{\beta}(3-\beta)}
\displaystyle\int_{0}^{1} x^{2-\beta} \mathrm{d} x \\
\nonumber \leq&\quad  \frac{(2-\beta)^2 }{ e^\frac{\alpha}{\beta} } 
\frac{\alpha}{3-\beta}
\end{align}

Therefore,
\begin{align*}
&\quad\displaystyle\lim_{\alpha \rightarrow \infty} \frac{e^\alpha/2m}{D \log n} \geq  
\displaystyle\lim_{\alpha \rightarrow \infty} \frac{e^\alpha} {2\frac{e^{2\alpha/\beta}}{2-\beta}} \frac{(3-\beta)e^{\alpha/\beta}}{\alpha (2-\beta)^2} \\
&\geq \displaystyle\lim_{\alpha \rightarrow \infty}  \frac{3-\beta}{\alpha(2-\beta)} e^{\alpha(1-\beta^{-1})} = \infty
\end{align*}
Hence, the theorem follows. 
\end{proof}  


\label{sec:exp}
\begin{table}[tbph]
  \centering
  \caption{The modularity obtained by previous published methods GN \cite{Girvan02}, EIG \cite{Newman06}, VP \cite{Agarwal08}, LP$_\mathrm{complete}$\cite{Agarwal08}, our \emph{sparse metric} approach LP$_\mathrm{sparse}$ and the optimal modularity values OPT \cite{Cafieri10}. The optimal modularity for network 8 (as a whole) has not been known before; we compute it by solving our our IP$_\mathrm{sparse}$ within only 15 seconds.}
    \begin{tabular}{ccrrrrrr}
    \addlinespace
    \toprule
    ID & n  & GN & EIG & VP & LP$_\mathrm{complete}$ & LP$_\mathrm{sparse}$  & OPT\\
    \midrule
    1     &  34    & 0.401 & 0.419 & 0.420 & 0.420 &0.420 & 0.420\\
    2     &   62   & 0.520 & - & 0.526 & 0.529 & 0.529& 0.529  \\
    3     &   77   & 0.540 & - & 0.560 & 0.560 & 0.529& 0.529  \\
    4     &   105  & -     & 0.526 & 0.527 & 0.527 & 0.529& 0.529  \\
    5     &   115  & 0.601 & - & 0.605 & 0.605 & 0.605& 0.605 \\
    6     &   332  & -      & -  & -  & -  & 0.368& 0.368  \\    
    7     &  512   & -  & - & - & - & 0.819& 0.819    \\ 
    8     &  1589  &  -    & -  & -  &  -  & 0.955& 0.955  \\       
    \bottomrule
    \end{tabular}%
  \label{tab:mod}%
  \vspace{-0.15in}
\end{table}%
\section{Computational experiments}
\label{sec:exp}
We present experimental results for our linear programming rounding algorithm in Section \ref{sec:lp}. The LP solver is GUROBI 4.5, running on a PC computer with Intel 2.93 Ghz processor and 12 GB of RAM. We evaluate our algorithm on several standard test cases for community structure identification, consisting of real-world networks. The datasets names   together with their sizes are  are listed in Table \ref{tab:sum}. The largest network consists of 1580 vertices and 2742 edges. All references on datasets can be found in  \cite{Agarwal08} and \cite{Cafieri10}.

\begin{table}[tbph]
  \centering
  \caption{Number of constraints in formulations LP$_\mathrm{complete}$ used in papper \cite{Agarwal08} (Constraint$\langle$C$\rangle$) and the computational time (in seconds) (Time$\langle$C$\rangle$)  versus number of constraints in our \emph{sparse metric} formulation LP$_\mathrm{sparse}$ (Constraint$\langle$S$\rangle$) and its computational time(Time$\langle$S$\rangle$). }
    \begin{tabular}{crrrrrr}
    \addlinespace
    \toprule
    ID & n  & Constraint$\langle$C$\rangle$ & Constraint$\langle$S$\rangle$ & Time$\langle$C$\rangle$ & Time$\langle$S$\rangle$\\
    \midrule
    1     &   34   & 17,952 & 1,441&0.21 &0.02 \\
    2     &   62   &  113,460& 5,743&3.85 &0.11 \\
    3     &   77   &  219,450& 6,415&13.43 &0.08 \\
    4     &   105  &  562,380& 30,236&60.40 &1.76 \\
    5     &   115  &  740,715& 66,452&106.27 &13.98 \\
    6     &   332  &  18,297,018& 226,523& - &197.03\\
    7     &   512  &  66,716,160& 294,020& - &53.18 \\
    8     &  1589  &   2,002,263,942& 159,423& - &2.94 \\        
    \bottomrule
    \end{tabular}%
  \label{tab:perm}%
  \vspace{-0.15in}
\end{table}%

Since the same rounding procedure are applied on the optimal fractional solutions, both LP$_\mathrm{complete}$ and LP$_\mathrm{sparse}$ yield the same modularity values. However, LP$_\mathrm{sparse}$ can run on much larger network instances. The modularity of the rounding LP algorithms and other published methods are shown in Table \ref{tab:mod}. The rounding LP algorithm can find optimal solutions ( or within 0.1\% of the optimal solutions) in all cases. The source code for our LP algorithm can be obtained upon request.

Finally, we compare the number of constraints of the LP formulation used in \cite{Agarwal08} and our new formulation (LP$_\mathrm{sparse}$) in Table \ref{tab:perm}. Our new formulation  contains substantially less constraints, thus can be solved more effectively. The old LP formulation cannot be solved within the time allowance (10000 seconds) and the memory availability (12 GB) in cases of the network instances 6 to 8. The largest instance of 1589 nodes is solved surprisingly fast, taking under 3 seconds. The reason is due to the presence of leaves (nodes of degree one) and other special motifs that can be efficiently preprocessed with the reduction techniques in \cite{Arenas07}.

Our new technique substantially reduces the time and memory requirements  both theoretically and experimentally without any trade-off on the quality of the solution. The size of solved network instances raises from hundred to several thousand nodes while the running time on the medium-instances are sped up from 10 to 150 times. Thus, the \emph{sparse metric} technique is a suitable choice when the network has a moderate size and a community structure with performance guarantees is desired.  
 \section{Discussion}
\vspace{-0.05in}
 \label{sec:dis}
 We have proposed two algorithms for the modularity maximization problem in complex networks. Our algorithms successfully exploit sparseness and power-degree distribution property found in many complex networks to provide performance guarantees on the solutions.  On one hand, the algorithms implied in Theorem \ref{theo:approx} are the first approximation algorithms for maximizing modularity, hence, are of theoretical interest. On the other hand, our \emph{sparse metric} approach is an efficient method to find optimal or close to optimal community structure for networks of up to thousand nodes.
 
 Fortunato and Barthelemy \cite{Fortunato07} have recently shown that in general quality functions of global defintions of community, including modularity, has an intrinsic resolution scale, known as resolution limit. Therefore, they fail to detect communities smaller than a scale, which depends on global attributes of networks such as the total size and the degree of connection among communities. However, resolution limit can be overcome by introducing a scaling parameter $\lambda >0$ into the original modularity formula as independently proposed by Arenas et al. \cite{Arenas08} and R. Lambiotte et al. \cite{Lambiotte08}.
 \[
Q_\lambda(\mathcal{C}) = \frac{1}{2m}\displaystyle\sum_{i, j} \left(A_{i,j} - \lambda \frac{d_i d_j}{2m}\right)\delta_{i,j}
\]
 
 Our proposed methods work naturally with this extension with  little modification. The only changes in the LP formulations are in the objective cofficients; the modularity matrix $B$ is replaced with a new ``multi-scale'' modularity matrix $B^\lambda$ with $B^\lambda_{i,j} = A_{i, j} - \lambda \frac{d_i d_j}{2m}$. The \emph{sparse metric} technique still applies and provides the same guarantees as solving the complete LP formulation. In addition, the constant $\lambda$ does not affect the \emph{asymptotic} approximation ratios of algorithms in Theorem \ref{theo:approx}. Our ongoing work is to design an efficient modularity approximation algorithm that both gives a better approximation ratio and perform well in practice. 

%
%

\vspace{-0.1in}

\begin{thebibliography}{10}
\providecommand{\url}[1]{#1}
\csname url@samestyle\endcsname
\providecommand{\newblock}{\relax}
\providecommand{\bibinfo}[2]{#2}
\providecommand{\BIBentrySTDinterwordspacing}{\spaceskip=0pt\relax}
\providecommand{\BIBentryALTinterwordstretchfactor}{4}
\providecommand{\BIBentryALTinterwordspacing}{\spaceskip=\fontdimen2\font plus
\BIBentryALTinterwordstretchfactor\fontdimen3\font minus
  \fontdimen4\font\relax}
\providecommand{\BIBforeignlanguage}[2]{{%
\expandafter\ifx\csname l@#1\endcsname\relax
\typeout{** WARNING: IEEEtran.bst: No hyphenation pattern has been}%
\typeout{** loaded for the language `#1'. Using the pattern for}%
\typeout{** the default language instead.}%
\else
\language=\csname l@#1\endcsname
\fi
#2}}
\providecommand{\BIBdecl}{\relax}
\BIBdecl

\bibitem{Watts98}
D.~J. Watts and S.~H. Strogatz, ``{Collective dynamics of 'small-world'
  networks},'' \emph{Nature}, vol. 393, no. 6684, 1998.

\bibitem{Barabasi00}
A.~Barabasi, R.~Albert, and H.~Jeong, ``Scale-free characteristics of random
  networks: the topology of the world-wide web,'' \emph{Physica A}, vol. 281,
  2000.

\bibitem{Milo02}
R.~Milo, S.~Shen-Orr, S.~Itzkovitz, N.~Kashtan, D.~Chklovskii, and U.~Alon,
  ``{Network motifs: simple building blocks of complex networks.}''
  \emph{Science (New York, N.Y.)}, vol. 298, no. 5594, 2002.

\bibitem{Fortunato08}
S.~Fortunato and C.~Castellano, ``Community structure in graphs,''
  \emph{Encyclopedia of Complexity and Systems Science}, 2008.

\bibitem{Girvan02}
M.~Girvan and M.~E. Newman, ``Community structure in social and biological
  networks.'' \emph{PNAS}, vol.~99, no.~12, 2002.

\bibitem{Day84}
W.~H.~E. Day and H.~Edelsbrunner, ``Efficient algorithms for agglomerative
  hierarchical clustering methods,'' \emph{Journal of Classification}, vol.~1,
  1984.

\bibitem{Reichardt06}
J.~Reichardt and S.~Bornholdt, ``Statistical mechanics of community
  detection,'' \emph{Phys. Rev. E.}, vol.~74, 2006.

\bibitem{Anca07}
A.~Gog, D.~Dumitrescu, and B.~Hirsbrunner, ``Community detection in complex
  networks using collaborative evolutionary algorithms,'' in \emph{Advances in
  Artificial Life}, ser. LNCS.\hskip 1em plus 0.5em minus 0.4em\relax Springer
  Berlin / Heidelberg, 2007, vol. 4648.

\bibitem{Duch05}
J.~Duch and A.~Arenas, ``Community detection in complex networks using extremal
  optimization,'' \emph{Phys. Rev. E}, vol.~72, no.~2, 2005.

\bibitem{Newman06}
M.~E.~J. Newman, ``{Modularity and community structure in networks},''
  \emph{Proceedings of the National Academy of Sciences}, vol. 103, no.~23,
  2006.

\bibitem{Blondel08}
V.~D. Blondel, J.-L. Guillaume, R.~Lambiotte, and E.~Lefebvre, ``{Fast
  unfolding of communities in large networks},'' \emph{Journal of Statistical
  Mechanics: Theory and Experiment}, vol. 2008, no.~10, 2008.

\bibitem{Brandes08}
U.~Brandes, D.~Delling, M.~Gaertler, R.~Gorke, M.~Hoefer, Z.~Nikoloski, and
  D.~Wagner, ``On modularity clustering,'' \emph{Knowledge and Data
  Engineering, IEEE Transactions on}, vol.~20, no.~2, 2008.

\bibitem{Agarwal08}
G.~Agarwal and D.~Kempe, ``Modularity-maximizing graph communities via
  mathematical programming,'' \emph{Eur. Phys. J. B}, vol.~66, no.~3, 2008.

\bibitem{Cafieri10}
D.~Aloise, S.~Cafieri, G.~Caporossi, P.~Hansen, S.~Perron, and L.~Liberti,
  ``Column generation algorithms for exact modularity maximization in
  networks.'' \emph{Physical Review E - Statistical, Nonlinear and Soft Matter
  Physics}, vol.~82, 2010.

\bibitem{Charikar04}
M.~Charikar and A.~Wirth, ``Maximizing quadratic programs: Extending
  grothendieck's inequality,'' \emph{FOCS}, 2004.

\bibitem{Bansal02}
N.~Bansal, A.~Blum, and S.~Chawla, ``Correlation clustering,'' in \emph{Machine
  Learning}, 2002.

\bibitem{Kemighan70}
B.~W. Kemighan and S.~Lin, ``An efficient heuristic procedure for partitioning
  graphs,'' \emph{Journal of Classification}, 1970.

\bibitem{Dantzig54}
G.~Dantzig, R.~Fulkerson, and S.~Johnson, ``Solution of a large-scale
  traveling-salesman problem,'' \emph{Operations Research}, vol.~2, 1954.

\bibitem{Applegate09}
D.~L. Applegate, R.~E. Bixby, V.~Chvátal, W.~Cook, D.~G. Espinoza,
  M.~Goycoolea, and K.~Helsgaun, ``Certification of an optimal tsp tour through
  85,900 cities,'' \emph{Operations Research Letters}, vol.~37, no.~1, 2009.

\bibitem{Nesterove97}
Y.~Nesterove, ``Semidefinite relaxation and nonconvex quadratic optimization,''
  CORE Discussion Papers 1997044, 1997.

\bibitem{Aiello:massive}
W.~Aiello, F.~Chung, and L.~Lu, ``A random graph model for massive graphs,'' in
  \emph{STOC '00}.\hskip 1em plus 0.5em minus 0.4em\relax New York, NY, USA:
  ACM, 2000.

\bibitem{Barabasi02}
A.~L. Barabási, H.~Jeong, Z.~Néda, E.~Ravasz, A.~Schubert, and T.~Vicsek,
  ``Evolution of the social network of scientific collaborations,''
  \emph{Physica A: Statistical Mechanics and its Applications}, vol. 311, 2002.

\bibitem{Albert00}
R.~Albert, H.~Jeong, and A.~Barabasi, ``Error and attack tolerance of complex
  networks,'' \emph{Nature}, vol. 406, 2000.

\bibitem{Arenas07}
D.~J. F. A. G.~S. Arenas, A, ``Size reduction of complex networks preserving
  modularity,'' \emph{New J. Phys.}, vol.~9, 2007.

\bibitem{Fortunato07}
S.~Fortunato and M.~Barthélemy, ``Resolution limit in community detection,''
  \emph{Proceedings of the National Academy of Sciences}, vol. 104, no.~1,
  2007.

\bibitem{Arenas08}
\BIBentryALTinterwordspacing
A.~Arenas, A.~Fernandez, and S.~Gomez, ``Analysis of the structure of complex
  networks at different resolution levels,'' \emph{New J. Phys.}, vol.~10,
  2008. [Online]. Available: \url{doi:10.1088/1367-2630/10/5/053039}
\BIBentrySTDinterwordspacing

\bibitem{Lambiotte08}
R.~Lambiotte, J.~C. Delvenne, and M.~Barahona, ``Laplacian dynamics and
  multiscale modular structure in networks,'' \emph{arXiv}, vol. 812, 2008.

\end{thebibliography}

\end{document}